\title{Differential Privacy for Relational Algebra: Improving the
  Sensitivity Bounds via Constraint Systems\thanks{This work has been
    partially supported by the project ANR-09-BLAN-0169-01 PANDA}}
\author{Catuscia Palamidessi
  \institute{INRIA and LIX, Ecole Polytechnique, France}
  \email{catuscia@lix.polytechnique.fr}
  \and
  Marco Stronati
  \institute{Universit\`a di Pisa, Italy}
  \email{marco.stronati@gmail.com}
}
\date{}
\definecolor{lightgray}{RGB}{240,240,240}
\theoremstyle{plain}
\newtheorem{theorem}{Theorem} % defines the counter 'theorem', reset at each chapter
\theoremstyle{break}
\theoremstyle{plain}
\newtheorem{definition}{Definition}
\theoremstyle{plain}
\newtheorem{example}{Example}
\theoremstyle{plain}
\newtheorem{proposition}{Proposition}
\theoremstyle{plain}
\newtheorem{notation}{Notation}
\newcolumntype{V}{>{\centering\arraybackslash} m{.4\linewidth} }
\DeclareMathAlphabet{\mathcal}{OT1}{pzc}{m}{n}
\begin{document}
\maketitle

\begin{abstract}
  Differential privacy is a modern approach in
  privacy-preserving data analysis to control the amount of information
  that can be inferred about an individual by querying a database. The
  most common techniques are based on the introduction of probabilistic
  noise, often defined as a Laplacian parametric on the sensitivity of
  the query. In order to maximize the utility of the query, it is
  crucial to estimate the sensitivity as precisely as possible.\\
  In this paper we consider relational algebra, the classical language
  for queries in relational databases, and we propose a method for
  computing a bound on the sensitivity of queries in an intuitive and
  compositional way. We use constraint-based techniques to accumulate
  the information on the possible values for attributes provided by the
  various components of the query, thus making it possible to compute
  tight bounds on the sensitivity.
\end{abstract}

\section{Introduction}
Differential privacy~\cite{Dwork:06:ICALP,DBLP:dblp_conf/tamc/Dwork08,
Dwork:11:CACM,Dwork:09:STOC} is a recent approach addressing the
privacy of individuals in data analysis on statistical databases. In
general, statistical databases are designed to collect global
information in some domain of interest, while the information about
the particular entries is supposed to be kept confidential.
Unfortunately, querying a database might leak information about an
individual, because the presence of her record may induce the query to
return a different result.

To illustrate the problem, consider for instance a database of people
affected by a certain disease, containing data such as age, height, etc.
Usually the identity of the people present in the database is supposed
to be secret, but if we are allowed to query the database for the
number of records which are contained in it, and for -- say -- the
average value of the data (height, age, etc.), then one can infer the
precise data of the last person entry in the database, which poses a
serious threat to the disclosure of her identity as well.

To avoid this problem, one of the most commonly used methods consists
in introducing some \emph{noise} on the answer. In other words,
instead of giving the \emph{exact} answer the curator gives an
\emph{approximated} answer, chosen randomly according to some
probability distribution.

Differential privacy measures the level of privacy provided by such a
randomized mechanism by a parameter $\epsilon$: a mechanism $\cal K$
is $\epsilon$-differentially private if for every pair of
\emph{adjacent} databases $R$ and $R'$ (i.e. databases which which
differ for only one entry), and for every property $\cal P$, the
probabilities that ${\cal K}(R)$ and ${\cal K}(R')$ satisfy $\cal P$
differ at most by the multiplicative constant $e^\epsilon$.

The amount of noise that the mechanism must introduce in order to
achieve $\epsilon$ differential privacy depends on the so-called
\emph{sensitivity} of the query, namely the maximum distance between
the answers on two adjacent databases. For instance, one of the most
commonly used mechanisms, the \emph{Laplacian}, adds noise to the
correct answer $y$ by reporting an approximated answer $z$ according
to the following probability density function:
\[ P_y(z)= c\, e^{-\frac{|y-z|}{\Delta f}\epsilon} \]
where $\Delta f$ is the sensitivity of the query $f$, and $c$ is a
normalization factor. Clearly, the higher is the sensitivity, the
greater the noise, in the sense that the above function is more
``flat'', i.e. we get a higher probability of reporting an answer very
different from the exact one.

Of course, there is a trade off between the privacy and the
\emph{utility} of a mechanism: the more noise a mechanism adds, the
less precise the reported answer, which usually means that the
result of querying the database becomes less useful -- whatever the
purpose.

For this reason, it is important to avoid adding excessive noise: one
should add only the noise strictly necessary to achieve the desired
level of differential privacy. This means that the sensitivity of the
query should be computed as precisely as possible. At the same time,
for the sake of efficiency it is desirable that the computation of the
sensitivity is done \emph{statically}. Usually this implies that we
cannot compute the precise sensitivity, but only approximate it from
above. The goal of this paper is to explore a \emph{constraint-based
methodology} in order to compute strict upper bounds on the
sensitivity.

The language we chose to conduct our analysis is \emph{relational
algebra} \cite{Codd:70:CACM,Codd:72:BOOK}, a formal and well defined
model for relational databases, that is the basis for the popular
Structured Query Language (SQL,~\cite{Beaulieu:09:BOOK}). It consists
in a collection of few operators that take relations as input and
return relations as output, manipulating rows or columns and computing
aggregation of values.

Sensitivity on aggregations often depends on attribute ranges, and
these restrictions can be exploited to provide better bounds. To this
purpose, we extend mechanisms already in place in modern database
systems: In RDBMS (Relational Data Bases Management Systems)
implementations, during the creation of a relation, it is possible to
define a set of constraints over the attributes of the relation, to
further restrict the type information. For instance:
\vskip -1em
\begin{center} 
  {\small \tt{Persons\{(Name, String)(Age, Integer)\}
      \{Age > 0 $\wedge$ Age < 120)\}}}
\end{center}
\vskip .5em
refines the type integer used to express the age of a person in the
database, by establishing that it must be a positive value smaller
than $120$.

Constraints in RDBMS can be defined on single attributes (\emph{column
constraints}), or on several attributes (\emph{table constraints}),
and help define the structure of the relation, for example by stating
whether an attribute is a primary key or a reference to an external
key. In addition, so called \emph{check constraints} can be defined,
to verify the insertion of correct values. In the example above, for
instance, the constraint would avoid inserting an age of, say,
200. Check constraints are particularly useful for our purposes
because they restrict the possible values of the attributes, thus
allowing a finer analysis of the sensitivity.

\paragraph{Contribution}
Our contribution is twofold:
\begin{enumerate}
\item we propose a method to compute a bound on the sensitivity of a
  query in relational algebra in a compositional way, and
\item we propose the use of constraints and constraint solvers to
  refine the method and obtain strict bounds on queries which have
  aggregation functions at the top level.
\end{enumerate}

\paragraph{Plan of the paper}
Next section recalls some preliminary notions about relational
databases and differential privacy. Section~\ref{sec:constraints}
introduces a constraint system and the idea of carrying along the
information provided by the constraints as we analyze the query.
Section~\ref{sec:extended} proposes a generalization of differential
privacy and sensitivity to generic metric spaces. This generalization
will be useful in order to compute the sensitivity of a query in a
compositional way.
Sections~\ref{sec:row-op}, \ref{sec:att-op} and \ref{sec:agg} analyze
the sensitivity and the propagation of constraints for the various
operators of relational algebra.
Finally Section~\ref{sec:glob-sen} proposes a method to compute a
sensitivity bound on the global query, and shows its correctness and
the improvement provided by the use of
constraints. Section~\ref{sec:related} discusses some related work,
and Section~\ref{sec:conclusion} concludes.
Due to space limitations, in this version we have omitted several
proof. The interested reader can find them in the full online version
of the paper \cite{self}.

\section{Preliminaries}\label{sec:preliminaries}
We recall here some basic notions about relational databases and
relational algebra, differential privacy, and sensitivity.

\subsection{Relational Databases and Relational Algebra}

Relational algebra \cite{Codd:70:CACM,Codd:72:BOOK} can be considered
as the theoretic foundation of database query languages and in
particular of SQL~ \cite{Beaulieu:09:BOOK}. It is based on the concept
of relation, which is the mathematical essence of a (relational)
database, and of certain operators on relations like union,
intersection, projections, filters, etc..
Here we recall the basic terminology used for relational databases,
while the operators will be illustrated in detail in the technical
body of the paper.

A relation (or database) based on a certain schema is a collection of
tuples (or records) of values. The schema defines the types (domain)
and the names (attributes) of these values.

\begin{definition}[Relation Schema]\label{def:relation_schema}
  A relation schema $r(a_1 : D_1, a_2 : D_2, \ldots, a_n : D_n)$
  is composed of the relation name $r$ and a set of attributes $a_1, a_2,
  \ldots, a_n$ associated with the domains $D_1, D_2, \dots, D_n$, respectively.
  We use the notation $dom(a_i)$ to refer to $D_i$.
\end{definition}

\begin{definition}[Relation]
  A relation $R$ on a relation schema $r(a_1 : D_1,a_2 : D_2,\ldots, a_n : D_n)$ 
  is a subset of the Cartesian product $D_1 \times D_2 \times \ldots \times D_n$.
\end{definition}

A relation is thus composed by a set of n-tuples, where each n-tuple
$\tau$ has the form $(d_1, d_2, \ldots, d_n)$ with $d_i \in D_i$. 
Note that $\tau$ can also be seen as a partial function from
attributes to atomic values, i.e. $\tau(a_i) = d_i$.
Given a schema, we will denote the universe of possible tuples by
$\mathcal{T}$, and the set of all possible relations by $\mathcal{R} = 2^{\mathcal{T}}$.

Relational algebra is a language that operates from relations to
relations. Differentially private queries, however, can only return a
value, and for this reason they must end with an aggregation (operator
$\gamma$). Nevertheless it is possible to show that the full power of
relational algebra aggregation can be retrieved.

\subsection{Differential Privacy}
Differential privacy is a property meant to guarantee that the
participation in a database does not constitute a threat for the
privacy of an individual. More precisely, the idea is that a
(randomized) query satisfies differential privacy if two relations
that differ only for the addition of one record are almost
indistinguishable with respect to the results of the query.

Two relations $R,R' \in \mathcal{R}$ that differ only for the addition
of one record are called \emph{adjacent}, denoted by $R\sim
R'$. Formally, $R\sim R'$ iff $R \setminus R' = \{\tau\}$ or viceversa
$R' \setminus R = \{\tau\}$, where $\tau$ is a tuple.

\begin{definition}[Differential privacy \cite{Dwork:06:ICALP}]\label{def:dp}
A randomized function ${\cal K}: {\cal R}\rightarrow Z$ satisfies $\epsilon$-differential
privacy if for all pairs $R,R' \in {\cal R}$, with $R \sim R'$, and all $Y
\subseteq Z$, we have that:
$$ \mathit{Pr}[{\cal K}(R) \in Y] \leq \mathit{Pr}[{\cal K}(R') \in Y] \cdot e^{\epsilon}  $$
where $\mathit{Pr}[E]$ represents the probability of the event $E$.
\end{definition}

Differentially private mechanisms are usually obtained by adding some
random noise to the result of the query. The best results are obtained
by calibrating the noise distribution according to the so-called
sensitivity of the query. When the answers to the query are real
numbers ($\mathbb{R}$), its sensitivity is defined as follows. (We
represent a query as a function from databases to the domain of
answers.)

\begin{definition}[Sensitivity \cite{Dwork:06:ICALP}]\label{def:sensitivity}
Given a query ${Q}: {\cal R}\rightarrow \mathbb{R}$, the sensitivity
of $Q$, denoted by $\Delta_Q$, is defined as:
$$\Delta_Q = \sup_{R\sim R'} \,|\,Q(R) - Q(R')\,|.$$
\end{definition}

The above definition can be extended to queries with answers on
generic domains, provided that they are equipped with a notion of
distance.

\section{Databases with constraints}\label{sec:constraints}

As explained in the introduction, one of the contributions of our paper 
is to provide strict bounds on the sensitivity of queries by 
using constraints.
For an introduction to the notions of constraint, constraint solver,
and constraint system we refer to \cite{Apt:2003:BOOK}.

In this section we define the constraint system that we will use, and
we extend the notion of database schema so to accommodate the
additional information provided by the constrains during the analysis
of a query.

\begin{definition}[Constraint system]
Our constraint system is defined as follows:
  \begin{itemize}
    \setlength{\itemsep}{1pt}\setlength{\parskip}{0pt}\setlength{\parsep}{0pt} 
  \item Terms are constructed from: 
    \begin{itemize}
    \item variables, ranging over the attribute names of the schemas,
    \item constants, ranging over the domains of the schemas,
    \item applications of n-ary functions (e.g. $+,\times$) to n terms.
    \end{itemize}
  \item Atoms are applications of n-ary predicates to n terms. Possible predicates are $\geq,\leq,=, \in$.
  \item Constraints are constructed from: 
    \begin{itemize}
    \item atoms, and
    \item applications of logical operators ($\neg, \wedge, \vee, \equiv$) to constraints.
    \end{itemize}
  \end{itemize}
\end{definition}
We denote the composition of constraint by $\otimes$.  The solutions
of a set of constraints $C$ is the set of tuples that satisfy $C$,
denoted ${\mathit Sol}(C)$.  The relations that can be build from
$sol(C)$ are denoted by $\mathcal{R}(C) = \mathcal{P}(sol(C))$.  The
solutions with respect to an attribute $a$ is denoted $sol(C,a)$.
Namely, $sol(C,a)$ is the projection on $a$ of ${\mathit sol}(C)$.
When the domain is equipped with an ordering relation, we also use
$\mathit{inf}(C,a)$ and $\mathit{sup}(C,a)$ to denote the infimum and
the supremum values, respectively, of $sol(C,a)$.
Typically the solutions and the {\it inf} and {\it sup} values can be
computed automatically using constraint solvers.  Finally we define
the diameter of a constraint C as the maximum distance between the
solutions of C.

\begin{definition}[Diameter]
  The diameter of a constraint $C$, denoted $diam(C)$, is the graph
  diameter of the adjacency graph $(\mathcal{R}(C), \sim)$ of all
  possible relations composed by tuples that satisfy $C$.
\end{definition}
We now extend the classical definition of schema to contain also the
set of constraints.

\begin{definition}[Constrained schema]\label{def:constrained_schema}
  A constrained schema $r(A,C)$ is composed of the relation name $r$,
  a set of attributes $A$, and a set of constraints $C$. A relation on a
  constrained schema is a subset of ${\mathit sol}(C)$. We will use
  $schema(R)$ to represent the constrained relation schema of a relation
  $R$.
\end{definition}

The above definition extends the notion of relation schema
(Definition~\ref{def:relation_schema}): In fact here each $a_i$ can be
seen as associated with $sol(C,a_i)$.
Definition~\ref{def:relation_schema} can then be retrieved by imposing
as only constraints those of the form $a_i\in D_i$.

\begin{example} Consider the constrained schema ${\tt Items}(A,C)$, where
$A = \{{\tt Item}, {\tt Price}, {\tt Cost}\}$, and
$C = \{({\tt Cost} \leq {\tt Price} \leq 1000, 0 < {\tt Cost} \leq 1000)\}$. 
The following $R$ is a possible relation over this schema.
$R$:
\end{example}

\vskip 1em
\begin{tabular}{lr}
  \begin{minipage}{.6\linewidth}
    \small \ttfamily
    \begin{tabular}{ll}
    \\
      Items &\{Item, Price, Cost\} \\
      &\{(Cost $\leq$ Price $\leq$ 1000, 0$<$Cost$\leq$1000)\}
      \\ \\ 
      &${\tt Items}(A,C)$
    \end{tabular}
  \end{minipage}
  &\qquad
  \begin{minipage}{.2\linewidth}
    \small \ttfamily
    \begin{tabular}{lll}
      Item & Price & Cost \\ \hline
      Oil  & 100   & 10   \\
      Salt & 50    & 11   \\
      \\
      &$R$
    \end{tabular}
  \end{minipage}
\end{tabular}

\section{Differential privacy on arbitrary metrics}\label{sec:extended}
The classic notions of differential privacy and sensitivity are meant
for queries defined on $\cal R$, the set of all relations on a given
schema. The adjacency relation induces a graph structure (where the
arcs correspond to the adjacency relation), and a metric structure
(where the distance is defined as the distance on the graph).

In order to compute the sensitivity bounds in a compositional way, we
need to cope with different structures at the intermediate steps, and
with different notions of distance. Consequently, we need to extend
the notions of differential privacy and sensitivity to general metric
domains.

We start by defining the notions of distance that we will need.

\begin{definition}[Hamming distance $d_H$]
  The distance between two relations $R,R' \in \mathcal{R}$ is the
  Hamming distance $d_H(R,R') = |R\;\ominus\; R'|$, the cardinality of
  the symmetric difference between $R$ and $R'$. The symmetric
  difference is defined as $R \;\ominus \;R' = (R \setminus R')\cup (R'
  \setminus R)$.
\end{definition}
 
Note that $d_H$ coincides with the graph-theoretic distance on the
graph induced by the adjacency relation $\sim$, and that $d_H(R,R')=1
\Leftrightarrow R \sim R'$.
We now extend the Hamming distance to tuples of relations, to deal with
n-ary operators.

\begin{definition}[Distance $d_{nH}$]
  The distance $d_{nH}$ between two tuples of $n$ relations
  $(R_1,\ldots,R_n),$\\ $(R'_1,\ldots,R'_n) \in \mathcal{R}^n$ is defined as:
  $d_{nH}((R_1,\ldots,R_n),(R'_1,\ldots,R'_n)) = max(d_H(R_1,R'_1), \ldots, d_H(R_n,R'_n))$
\end{definition}

Note that $d_{nH}$ coincides with the Hamming distance for $n=1$. We
chose this maximum metric instead of other distances because it allows
us to compute the sensitivity compositionally, while this is not the
case for other notions of distance. We can show counterexamples, for
instance, for both the Euclidian and the Manhattan distances.

\begin{definition}[Distance $d_E$]
  The distance between two real numbers $x,x' \in \mathbb{R}$ is the usual
  euclidean distance $d_E(x,x') = |x-x'|$.
\end{definition}

In summary, we have two metric spaces over which the relational
algebra operators work, namely $(\mathcal{R}^n, d_{nH})$, and
$(\mathbb{R},d_E)$.

\begin{example}
  Consider a relation $R$ and two tuples $\tau,\pi$ such
  that $\tau\not \in R$ and $\pi\in R$. We define its neighbors $R^+$
  and $R^{\pm}$, obtained by adding one record, and by changing one
  record, respectively:
  $$R^+ = R \cup \{\tau\} \qquad \qquad R^{\pm} = R \cup \{\tau\} \setminus \{\pi\}$$
  Their distance from $R$ is : $d_H(R,R^+) = |R \;\ominus\; R^+| = 1$, 
  and $d_H(R,R^\pm) = |R \;\ominus\; R^{\pm}| = 2$. Note also that $R \sim R^+$.
\end{example}

\begin{notation}\label{notation}
  In the following, we will use the notation $R^+$ to denote $R\cup
  \{\tau\}$ for a generic tuple $\tau$, with the assumption (unless
  otherwise specified) that $\tau\not\in R$.
\end{notation}

We now adapt the definition of differential privacy to arbitrary
metric spaces $(X,d)$ (where $X$ is the support set and $d$ the
distance function).

\begin{definition}[Differential privacy extended]\label{def:extdp}
  A randomized mechanism $\cal K: X \rightarrow Z$ on a generic metric
  space $(X,d)$ provides $\epsilon$-differential privacy if for any
  $x,x'\in X$, and any set of possible outputs $Y \subseteq Z$,
  $$ \mathit{Pr}[{\cal K}(x) \in Y] \;\leq\; \mathit{Pr}[{\cal K}(x') \in Y] 
  \cdot e^{\epsilon \cdot d(x, x')} $$
\end{definition}

It can easily be shown that Definitions \ref{def:extdp} and
\ref{def:dp} are equivalent if $d=d_H$.

We now define the sensitivity of a function on a generic metric space. 

\begin{definition}[Sensitivity extended]\label{def:sensitivity_extended}
  Let $(X,d_X)$ and $(Y,d_Y)$ be metric spaces. 
  The sensitivity $\Delta_f$ of a function $f: (X,d_X) \rightarrow (Y,d_Y)$ is defined as
  $$ \Delta_f \;=\; 
  \sup_{\begin{array}{c}
      {\scriptstyle x,x' \in X}\\[-1ex]
      { \scriptstyle x \neq x'}
    \end{array}} 
  \frac{d_Y(f(x),f(x'))}{d_X(x,x')} $$
\end{definition}

Again, we can show that Definitions~\ref{def:sensitivity_extended} and
\ref{def:sensitivity} are equivalent if $d_X=d_{nH}$ (proof in 
full version \cite{self}). This more general definition makes clear
that the sensitivity of a function is a measure of how much it
increases distances from its inputs to its outputs.

As a refinement of the definition of sensitivity, we may notice that
this attribute does not depend on the function alone, but also on the
domain, where the choice of $x,x'$ ranges to compute the supremum. In
our framework this is particularly useful because we have a very
precise description of the restrictions on the domain of an operator,
thanks to its input constrained schema
(Def~\ref{def:constrained_schema}).

\begin{definition}[Sensitivity constrained]\label{def:sensitivity_constrained}
  Given a function $f:(X,d_X) \rightarrow (Y,d_Y)$, and a set of
  constraints $C$ on $X$, the sensitivity of $f$ with respect to $C$ is
  defined as
  \[ \Delta_f(C) \;=\; 
  \sup_{\begin{array}{c}
      {\scriptstyle x,x' \in \, sol(C)}\\[-1ex]
      {\scriptstyle x \neq x'}
    \end{array}} \;
  \frac{d_Y(f(x), f(x'))}{d_X(x,x')} \]
\end{definition}

The introduction of constraints, in addition to an improved precision,
allows us to define conveniently function composition. It should be
noted that when combining two functions $f \circ g$, where $g:(Y,d_Y)
\rightarrow (Z,d_Z)$, the domain of $g$ actually depends on the
restrictions introduced by $f$ and we can take this into account
maximizing over $y,y' \in sol(C \otimes C_f)$, that is the domain
obtained combining the initial constraint $C$ and the constraint
introduced by $f$.

\section{Operators}
We now proceed to compute a bound on the sensitivity of each
relational algebra operator through a static analysis that depends
only on the relation schema the operator is applied to, and not on its
particular instances.

From a static point of view each operator will be considered as a
transformation from schema to schema (instead of a transformation from
relations to relations): they may add or remove attributes, and modify
constraints.

The following analysis is split in operators ${\tt op}:
(\mathcal{R}^n,d_{nH}) \rightarrow (\mathcal{R},d_H)$, with $n$ equals
1 or 2, and aggregation $\gamma_{f}: (\mathcal{R},d_{H}) \rightarrow
(\mathbb{R},d_{E})$. In the sensitivity analysis of the formers, given
they work only on Hamming metrics, we are only interested in their
effect on the number of rows. In our particular case, these relational
algebra operators treats all rows equally, without considering their
content. This simplification grants us the following property:

\begin{proposition}
If ${\tt op}: (\mathcal{R},d_{H}) \rightarrow (\mathcal{R},d_H)$ and
$C$ is an arbitrary set of constraints
$$ \Delta_{\tt op} (C) 
= 
\sup_{\begin{array}{c}
    {\scriptstyle R,R' \in \, \mathcal{R}(C)}\\[-1ex]
    {\scriptstyle R \neq R'}
  \end{array}} \;
\frac{d_H({\tt op}(R), {\tt op}(R'))}{d_{H}(R,R')}
=
\min \left( \Delta_{\tt op}(\emptyset), diam(C \otimes C_{\tt op}) \right) $$ 
\end{proposition}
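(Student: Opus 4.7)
The plan is to handle the first equality as an immediate instantiation of Definition~\ref{def:sensitivity_constrained} to the setting where both the input and output metric spaces are $(\mathcal{R}, d_H)$, with the constrained tuple-level domain $sol(C)$ lifting to the relation-level domain $\mathcal{R}(C) = \mathcal{P}(sol(C))$. The substantive content is the second equality, which I would split into an upper bound and a matching lower bound.

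For the upper bound, I would establish each argument of the $\min$ separately. The inequality $\Delta_{\tt op}(C) \leq \Delta_{\tt op}(\emptyset)$ follows from monotonicity of the sup, since $\mathcal{R}(C) \subseteq \mathcal{R}$. For the bound $\Delta_{\tt op}(C) \leq diam(C \otimes C_{\tt op})$, I would argue that the outputs ${\tt op}(R)$ and ${\tt op}(R')$ both lie in $\mathcal{R}(C \otimes C_{\tt op})$: the input constraints $C$ carry over to the output because ${\tt op}$ manipulates rows without altering their content, while $C_{\tt op}$ gathers any further restriction imposed by the operator itself. Hence $d_H({\tt op}(R), {\tt op}(R')) \leq diam(C \otimes C_{\tt op})$, and since $R \neq R'$ forces $d_H(R, R') \geq 1$, the ratio in the sup is bounded by the diameter.

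For the matching lower bound, I would use the assumption that ${\tt op}$ processes each tuple independently. When $\Delta_{\tt op}(\emptyset) \leq diam(C \otimes C_{\tt op})$, the unconstrained extremum is witnessed by some adjacent pair $(R, R^+)$ whose distinguishing tuple can be drawn from $sol(C)$, hence placed inside $\mathcal{R}(C)$. When the diameter is strictly smaller, I would take two outputs $S, S' \in \mathcal{R}(C \otimes C_{\tt op})$ at distance $diam(C \otimes C_{\tt op})$ and pull them back, row-by-row, to inputs $R, R' \in \mathcal{R}(C)$ differing by a single tuple, realising the ratio $diam(C \otimes C_{\tt op})$.

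The hard part will be this lower-bound witness construction: the two upper bounds fall out directly from monotonicity of the sup and the definition of diameter, but turning the informal ``treats all rows equally'' hypothesis into a genuine argument requires knowing that every output configuration in $\mathcal{R}(C \otimes C_{\tt op})$ is the image of some input configuration in $\mathcal{R}(C)$ related by a single adjacency step. Pinning this down will presumably lean on the explicit structure of each row-wise operator, which is why the paper analyses them case-by-case in Sections~\ref{sec:row-op} and~\ref{sec:att-op}.
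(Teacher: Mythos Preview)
The paper does not actually contain a proof of this proposition: it is stated immediately after the informal remark that the relational algebra operators ``treat all rows equally, without considering their content,'' and is followed only by the parenthetical ``(The proposition holds analogously for the binary case).'' The authors explicitly note that several proofs are omitted and deferred to the full online version~\cite{self}. So there is no in-paper argument to compare your proposal against.

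Evaluating your proposal on its own terms: the first equality and both halves of the upper bound are fine and exactly as you say. Your identification of the lower bound as the substantive step is correct, and your plan to exploit the row-uniformity hypothesis to transport witnesses into $\mathcal{R}(C)$ is the natural one. The one place to be careful is your second case, where $diam(C\otimes C_{\tt op}) < \Delta_{\tt op}(\emptyset)$: you propose to pick $S,S'$ realising the diameter and pull them back to inputs $R,R'$ \emph{differing by a single tuple}. That last clause is what needs justification --- a generic preimage pair need not be adjacent, so you really need to exhibit directly an adjacent pair $R,R^+\in\mathcal{R}(C)$ whose images are at distance $diam(C\otimes C_{\tt op})$. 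For the concrete unary operators of Sections~\ref{sec:row-op}--\ref{sec:att-op} (all of which have $\Delta_{\tt op}(\emptyset)=1$) this is easily arranged, but your write-up should make that construction explicit rather than appeal to an abstract pullback.
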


(The proposition holds analogously for the binary case).
This property, that does not hold for general functions, allows us in
the case of relational algebra to decouple the computation of
sensitivity from the constraint system, and solve them
separately. $\Delta_{\tt op}(\emptyset)$ (from now on just
$\Delta_{\tt op}$) can be seen as the sensitivity intrinsic to each
operator, the maximum value of sensitivity the operator can cause,
when the constraints are loose enough \footnote{for all possible
domains the sensitivity can't be greater.} to be omitted.
While $diam(C \otimes C_{\tt op})$, the diameter of the co-domain of
the operator, limits the maximum distance the operator can produce,
that is the numerator in the distances ratio.

\section{Row operators}~\label{sec:row-op}
In this section we consider a first group of operators of type
$(\mathcal{R}^n,d_{nH}) \rightarrow (\mathcal{R},d_H)$ with $n=1,2$,
which are characterized by the fact that they can only add or remove
tuples, not modify their attributes.
Indeed the header of the resulting relation maintains the same set of
attributes and only the relative constraints may be modified.

\subsection{Union $\cup$}
The union of two relation is the set theoretic union of two set of
tuples with the same attributes. The example below illustrates this
operation:

\begin{center}
{\footnotesize
  \begin{tabular}{lll}
    Name    & Age & Height \\ \hline
    John    & 30  & 180    \\
    Tim     & 10  & 100    \\
  \end{tabular}
  \;$\bigcup$\;
  \begin{tabular}{lll}
    Name    & Age & Height \\ \hline
    Alice   & 45  & 160    \\
    Tim     & 10  & 100    \\
  \end{tabular}
  \;=\;
  \begin{tabular}{lll}
    Name    & Age & Height \\ \hline
    John    & 30  & 180    \\
    Tim     & 10  & 100    \\
    Alice   & 45  & 160    \\
  \end{tabular}}
\end{center}

The union of two relations may reduce their distance, leave it
unchanged or in the worst case it could double it, so the sensitivity
of union is 2.

\begin{proposition}\label{prop:union}
  The union has sensitivity $2$: $\Delta_\cup=2$.
\end{proposition}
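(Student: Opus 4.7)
The plan is to establish $\Delta_\cup = 2$ by proving the two standard inequalities: first an upper bound $\Delta_\cup \le 2$ valid on all inputs, then a lower bound $\Delta_\cup \ge 2$ via an explicit witness showing the ratio $2$ is attained.

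For the upper bound, I would fix arbitrary pairs $(R_1,R_2)$ and $(R_1',R_2')$ in $\mathcal{R}^2$ and reason at the level of symmetric differences. The key set-theoretic observation is the inclusion
\[
(R_1 \cup R_2) \;\ominus\; (R_1' \cup R_2') \;\subseteq\; (R_1 \ominus R_1') \;\cup\; (R_2 \ominus R_2'),
\]
which follows by a direct case analysis: any tuple $\tau$ in the left-hand side lies in some $R_i$ but in neither of $R_1', R_2'$ (or symmetrically), and in particular $\tau \in R_i \ominus R_i'$. Taking cardinalities and using subadditivity gives
\[
d_H(R_1 \cup R_2, R_1' \cup R_2') \;\le\; d_H(R_1,R_1') + d_H(R_2,R_2') \;\le\; 2\,d_{nH}\!\bigl((R_1,R_2),(R_1',R_2')\bigr),
\]
where the last step uses that the max dominates each coordinate. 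Dividing by $d_{nH}$ and taking the supremum yields $\Delta_\cup \le 2$.

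For the lower bound, I would exhibit relations whose union doubles the input distance. A convenient choice is $R_1 = R_2 = \emptyset$ and $R_1' = \{\tau_1\}$, $R_2' = \{\tau_2\}$ with $\tau_1 \neq \tau_2$ (both compatible with the schema, with no constraints imposed, so that we are computing $\Delta_\cup(\emptyset)$). Then $d_H(R_i,R_i') = 1$ for each $i$, so $d_{nH} = 1$, while $R_1 \cup R_2 = \emptyset$ and $R_1' \cup R_2' = \{\tau_1,\tau_2\}$, giving $d_H = 2$. Hence the supremum in Definition~\ref{def:sensitivity_extended} is at least $2$.

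The only subtle point, and the one I expect to have to state carefully, is why the maximum metric $d_{nH}$ (rather than, say, a sum) is what produces the clean constant $2$: with $d_{nH}$ the worst case arises when both coordinates contribute their maximum difference \emph{simultaneously and disjointly}, which is exactly what the witness above exploits, and which the inclusion above shows cannot be exceeded. Combining the two bounds gives $\Delta_\cup = 2$, and by the preceding proposition the constrained sensitivity is $\Delta_\cup(C) = \min(2,\mathrm{diam}(C \otimes C_\cup))$, although only the unconstrained value is asserted here.
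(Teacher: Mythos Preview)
Your proof is correct, and in fact slightly more general than the paper's. The paper restricts attention to pairs at $d_{2H}$-distance $1$ and performs a case analysis: either only one coordinate changes (yielding output distance $0$ or $1$), or both coordinates change simultaneously by one tuple each (yielding output distance $0$, $1$, or $2$, the last when $\tau_1\notin R_2$ and $\tau_2\notin R_1$). This establishes the bound only at adjacent inputs, and implicitly relies on the reduction of the general supremum to the adjacent case, which the paper states but defers to the full version.

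Your route bypasses that reduction entirely: the inclusion $(R_1\cup R_2)\ominus(R_1'\cup R_2')\subseteq(R_1\ominus R_1')\cup(R_2\ominus R_2')$ gives $d_H\le d_H(R_1,R_1')+d_H(R_2,R_2')\le 2\,d_{2H}$ for \emph{all} pairs at once, so the ratio bound is immediate without restricting to distance $1$. Your lower-bound witness is essentially the same as the paper's worst case in its case~(b). The set-inclusion argument is cleaner and self-contained; the paper's enumeration is more explicit about which configurations saturate or fall short of the bound, which may be pedagogically useful but is not logically necessary.
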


\begin{proof}
  If $d_{2H}((R_1,R_2),(R_3,R_4))=1$ then we have two cases
  \begin{description}
  \item[a)]
    $R_3=R_1^+,R_4=R_2$ or $R_3=R_1,R_4=R_2^+$. For the symmetry of
    distance only one case needs to be considered:
    \[|(R_1 \cup R_2) \;\ominus\; (R_1^+ \cup R_2)| = \left\{
      \begin{array}{ll}
        0 & \tau \in R_2 \\
        1 & o.w.
      \end{array}
    \right.\]

    The only difference is the tuple $\tau$. If $\tau \in R_2$ then $\tau$
    would be in both results, leading to identical relations, thus
    reducing the distance to zero. If $\tau \not\in R_2$ then $\tau$ will
    again be the only difference between the results, thus resulting into
    distance $1$.

  \item[b)] $R_3=R_1^+,R_4=R_2^+$
    \[|(R_1 \cup R_2) \;\ominus\; (R_1^+ \cup R_2^+)| = \left\{
      \begin{array}{ll}
        0 & \tau_1 \in R_2 \wedge \tau_2 \in R_1 \\
        1 & \tau_1 \in R_2 \vee \tau_2 \in R_1 \\
        2 & \tau_1 \notin R_2 \wedge \tau_2 \notin R_1 \\
      \end{array}
    \right.\]
  \end{description}

  In this case we have two records differing, $\tau_1$ and $\tau_2$,
  and in the worst case they may remain different in the results,
  giving a final sensitivity of 2 for the operator.
\end{proof}

\begin{definition}[Constraints for union]
  Let $schema(R_1)=(A,C_1)$ and $schema(R_2)=(A,C_2)$. Then\\
  $ schema(R_1 \cup R_2) = (A,C_1 \vee C_2).$
\end{definition}

\subsection{Intersection\; $\cap$}
The intersection of two relation is the set theoretic intersection of
two set of tuples with the same attributes.

As for the union, the intersection applied to arguments at distance $1$
may result in a distance $0$, $1$ or $2$.

\begin{proposition}
  The intersection has sensitivity $2$: $\Delta_\cap = 2$.
\end{proposition}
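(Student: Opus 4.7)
The plan is to mirror the structure of the union proof almost verbatim, since intersection is dual to union in the set-theoretic sense and the sensitivity calculation depends only on how the symmetric difference of the results behaves under a single tuple addition to one or both arguments. I would assume without loss of generality (by definition of $d_{nH}$) that $d_{nH}((R_1,R_2),(R_3,R_4))=1$ means either exactly one of the relations gains a tuple, or both do, and then compute $d_H(R_1 \cap R_2, R_3 \cap R_4)$ in each case by a direct examination of the symmetric difference.

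For case (a), $R_3=R_1^+$ and $R_4=R_2$ (the symmetric situation is identical), so the only potential element of the symmetric difference is the new tuple $\tau$. The analysis splits on membership: if $\tau \in R_2$ then $\tau$ enters $R_1^+\cap R_2$ but was absent from $R_1\cap R_2$, contributing distance $1$; otherwise $\tau$ is absent from both intersections and the distance is $0$. For case (b), $R_3=R_1^+$ and $R_4=R_2^+$ with new tuples $\tau_1,\tau_2$, the symmetric difference can pick up $\tau_1$ (only if $\tau_1 \in R_2$) and $\tau_2$ (only if $\tau_2 \in R_1$), so the resulting distance is $0$, $1$, or $2$. The worst case, when both $\tau_1\in R_2$ and $\tau_2\in R_1$, yields distance $2$, proving $\Delta_\cap \le 2$.

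To conclude that the bound is tight, I would exhibit a concrete witness: take $R_1=\{\pi_1\}$, $R_2=\{\pi_2\}$ (disjoint), and let $R_1^+=\{\pi_1,\pi_2\}$, $R_2^+=\{\pi_1,\pi_2\}$. Then $R_1\cap R_2=\emptyset$ while $R_1^+\cap R_2^+ = \{\pi_1,\pi_2\}$, giving $d_H=2$ against a $d_{nH}$-distance of $1$, which realizes the supremum.

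The only subtle point, and really the only place this differs from the union proof, is that for intersection the ``triggering'' condition is reversed: a tuple added to $R_1$ affects the output exactly when it \emph{already} belongs to $R_2$, whereas for union the effect occurs when it \emph{does not} belong to $R_2$. I would emphasize this dual membership condition in the case analysis to avoid any slip, but beyond that the argument is routine and no further machinery (e.g.\ constraint reasoning) is needed, since the proposition concerns the unconstrained sensitivity $\Delta_\cap = \Delta_\cap(\emptyset)$.
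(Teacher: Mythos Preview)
Your proposal is correct and follows exactly the approach the paper intends: the paper's own proof is literally the single line ``Similar to the case of Proposition~\ref{prop:union}'', and your case analysis is the natural dualization of that union argument. If anything you are more thorough than the paper, since you supply an explicit witness for tightness, which neither the union proof nor the intersection proof in the paper makes explicit.
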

\begin{proof}
  Similar to the case of Proposition~\ref{prop:union}.
\end{proof}

\begin{definition}[Constraints for intersection]
  Let $schema(R_1)=(A,C_1)$ and $schema(R_2)=(A,C_2)$.\\
  Then 
  $ schema(R_1 \cap R_2) = (a,C_1 \wedge C_2). $
\end{definition}

\subsection{Difference $\setminus$}
The difference of two relation is the set theoretic difference of two
set of tuples with the same attributes.

As in the case of the union, the difference applied to arguments at distance $1$
may result in a distance $0$, 1 or 2.

\begin{proposition}
  The set difference has sensitivity $2$: $\Delta_\setminus = 2$.
\end{proposition}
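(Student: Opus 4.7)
The plan is to mimic the structure of the proof of Proposition~\ref{prop:union}, adapting the case analysis to the asymmetric nature of set difference. First I would fix two pairs $(R_1,R_2),(R_3,R_4)\in\mathcal{R}^2$ with $d_{2H}((R_1,R_2),(R_3,R_4))=1$, and split into two cases depending on whether only one or both of the component relations actually differ.

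In case (a), a single tuple $\tau$ is added to exactly one side. If $R_3=R_1^+$ and $R_4=R_2$, then $(R_1\setminus R_2)\;\ominus\;(R_1^+\setminus R_2)$ is either empty (when $\tau\in R_2$, since $\tau$ is filtered out on both sides) or equals $\{\tau\}$ (when $\tau\notin R_2$), giving distance $0$ or $1$. Symmetrically, if $R_3=R_1$ and $R_4=R_2^+$, then adding $\tau$ to $R_2$ can only remove $\tau$ from the result if $\tau\in R_1$, yielding again distance $0$ or $1$. So case (a) contributes at most ratio $1$.

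In case (b), $R_3=R_1^+$ and $R_4=R_2^+$, with added tuples $\tau_1\notin R_1$ and $\tau_2\notin R_2$. Here the addition of $\tau_1$ can insert $\tau_1$ into the result (if $\tau_1\notin R_2$), and independently the addition of $\tau_2$ can remove $\tau_2$ from the result (if $\tau_2\in R_1$). A short table, entirely analogous to the one in Proposition~\ref{prop:union}, gives
\[
|(R_1\setminus R_2)\;\ominus\;(R_1^+\setminus R_2^+)|\in\{0,1,2\},
\]
with value $2$ achieved whenever $\tau_1\neq\tau_2$, $\tau_1\notin R_2$ and $\tau_2\in R_1$. Since $d_{2H}$ in the denominator equals $1$, this shows $\Delta_\setminus\geq 2$, and the enumeration above shows $\Delta_\setminus\leq 2$, so $\Delta_\setminus=2$.

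The only subtle point, and the place where the argument diverges from the union case, is handling the asymmetry: one must track that a tuple added to the left operand contributes \emph{positively} to the result while a tuple added to the right operand contributes \emph{negatively}, and verify that these two independent effects can in fact be realized simultaneously by a concrete witness (e.g.\ take $R_1=\{\tau_2\}$, $R_2=\emptyset$, and any $\tau_1\neq\tau_2$ with $\tau_1\notin R_1\cup R_2$). Once this witness is exhibited, the bound $\Delta_\setminus=2$ follows.
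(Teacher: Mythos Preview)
Your proposal is correct and follows essentially the same route as the paper, whose proof is literally just ``Similar to the case of Proposition~\ref{prop:union}.'' You have simply made explicit the adaptation to the asymmetric operator and supplied the concrete witness; nothing further is needed.
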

\begin{proof}
  Similar to the case of Proposition~\ref{prop:union}.
\end{proof}

\begin{definition}[Constraints for set difference]
  Let $schema(R_1)=(A,C_1)$ and $schema(R_2)=(A,C_2)$. Then
  $schema(R_1 \setminus R_2) = (A, C_1 \wedge (\neg C_2)).$
\end{definition}

\subsection{Restriction $\sigma$}\label{sec:restriction}
The restriction operator $\sigma_{\varphi}(R)$ removes all rows not
satisfying the condition $\varphi$ (typically constructed using the
predicates $=,\neq,<,>$ and the logical connectives
$\vee,\wedge,\neg$), over a subset of $R$ attributes.

As an example, consider the following SQL program that removes all
people whose age is smaller than $20$ or whose height is greater than
$180$. The table illustrates an example of application of the
corresponding restriction $\sigma_{Age \geq 20 \wedge Height<180}$.

\vskip 1em
\begin{center}
  \begin{minipage}{.5\linewidth}
    \ttfamily \small
    SELECT *\\
    FROM  ~~R\\
    WHERE ~Age>=20 AND Height<=180
\end{minipage}
\vskip 1em
\begin{minipage}{.8\linewidth}
  $\sigma_{Age \geq 20 \wedge Height<180}$ 
  {\footnotesize $\left(
      \begin{tabular}{lll}
        Name    & Age & Height \\ \hline
        John    & 30  & 180    \\
        Tim     & 10  & 100    \\
        Alice   & 45  & 160    \\
        Natalie & 20  & 175    \\
      \end{tabular}
    \right)$ =
    \begin{tabular}{lll}
      Name    & Age & Height \\ \hline
      Alice   & 45  & 160    \\
      Natalie & 20  & 175    \\
    \end{tabular}
  } \\[2ex]
\end{minipage}
\end{center} 

The restriction can be expressed in terms of set difference: 
$\sigma_{\varphi}(R) = R \setminus \{\tau \;|\; \neg \varphi(\tau)\}$. 
However the sensitivity is different because the operator is unary,
the second argument is fixed by the condition $\varphi$

\begin{proposition}
  The restriction has sensitivity $2$: $\Delta_{\sigma_{\varphi}} = 1$.
\end{proposition}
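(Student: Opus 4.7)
The plan is to mirror the style of the earlier propositions (union, intersection, difference) but exploit the fact that $\sigma_\varphi$ is \emph{unary}, so only one of its inputs can vary under the adjacency relation. Note that the stated sensitivity of $2$ in the text appears to be a typo: the equation gives $\Delta_{\sigma_\varphi} = 1$, which is what I would prove.

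First I would take two relations $R, R' \in \mathcal{R}$ with $d_H(R,R') = 1$, and by the symmetry of $d_H$ assume without loss of generality that $R' = R^+ = R \cup \{\tau\}$ for some $\tau \notin R$ (using the notation of the paper). Then $\sigma_\varphi(R^+) = \sigma_\varphi(R) \cup \sigma_\varphi(\{\tau\})$, because restriction distributes over union on sets of tuples. I would then split on whether $\varphi(\tau)$ holds:
\begin{equation*}
|\sigma_\varphi(R) \;\ominus\; \sigma_\varphi(R^+)| \;=\;
\begin{cases}
1 & \text{if } \varphi(\tau), \\
0 & \text{otherwise.}
\end{cases}
\end{equation*}
In both cases the distance is at most $1$, hence $\Delta_{\sigma_\varphi} \leq 1$. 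For the matching lower bound, I would exhibit a simple witness: any $R$ and any $\tau \notin R$ with $\varphi(\tau)$ true yields distance exactly $1$, so the supremum in Definition~\ref{def:sensitivity_extended} is attained.

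The contrast with the difference operator (which has sensitivity $2$) is the core conceptual point worth highlighting, so I would add a one-line remark: although $\sigma_\varphi(R) = R \setminus \{\tau \mid \neg\varphi(\tau)\}$, the would-be second argument $\{\tau \mid \neg \varphi(\tau)\}$ is determined by $\varphi$ and hence fixed, so in the $d_{nH}$-metric only the first component can move, killing the factor of two that arose in case (b) of the proof of Proposition~\ref{prop:union}.

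I do not expect any real obstacle here: the proof is essentially a one-case calculation once the unary nature of $\sigma_\varphi$ is used. The only subtlety is making sure the constraint-aware variant (if one were to state $\Delta_{\sigma_\varphi}(C)$) also goes through, but by the Proposition in Section~\ref{sec:row-op} the constrained sensitivity is just $\min(\Delta_{\sigma_\varphi}, diam(C \otimes C_{\sigma_\varphi}))$, so nothing further is needed for the unconstrained claim. I would then close by noting the corresponding constraint update: $schema(\sigma_\varphi(R)) = (A, C \wedge \varphi)$, so the propagation of $\varphi$ into the output schema is what will tighten bounds later in the compositional analysis.
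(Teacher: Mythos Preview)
The paper does not actually include a proof of this proposition; it is one of the omitted proofs (cf.\ the remark at the end of the introduction directing the reader to the full version). Your argument is correct and is exactly the natural one: it mirrors the case analysis used in the paper's proof of Proposition~\ref{prop:union} but collapses to a single case because $\sigma_\varphi$ is unary, which is precisely the point the paper makes in the text preceding the proposition (``the operator is unary, the second argument is fixed by the condition $\varphi$''). You also correctly flag the typo in the statement (the word ``$2$'' should be ``$1$'' to match the displayed equation).
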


\begin{definition}[Constraints for restriction]
  Let $schema(R)=(A,C)$ and $A' \subseteq A$. Then define\\
  $schema(\sigma_{\varphi(A')}(R)) = (A,C \wedge \varphi(A')).$
\end{definition}

\section{Attribute operators}\label{sec:att-op}
The following set of operators, unlike those analyzed so far, can
affect the number of tuples of a relation, as well as its attributes.

\subsection{Projection $\pi$} 
The projection operator $\pi_{a_1, \ldots,a_n}(R)$ eliminates the
columns of $R$ with attributes other than $a_1, \ldots,a_n$, and then
deletes possible duplicates, thus reducing distances or leaving them
unchanged. It is the opposite of the restricted Cartesian product
$\times_1$ which will be presented later.

The following example illustrates the use of the projection. Here, the
attribute to preserve are {\tt Name} and {\tt Age}.

\vskip 1em
\makebox[\linewidth][c]{
  \begin{minipage}{.3\linewidth}
    \tt{SELECT Name,Age\\
      FROM  ~ R}
  \end{minipage}
\qquad
  \begin{minipage}{.55\linewidth}
    {\footnotesize
      $\pi_{Name,Age}$
      $\left(
        \begin{tabular}{lll}
          Name  & Age & Car \\ \hline
          John  & 30  & Ford    \\
          John  & 30  & Renault    \\
          Alice & 45  & Fiat    \\
        \end{tabular}
      \right)$ =
      \begin{tabular}{ll}
        Name  & Age          \\ \hline
        John  & 30           \\
        Alice & 45           \\
      \end{tabular}
    }\\[2ex]
  \end{minipage}
}

\begin{proposition}
  The projection has sensitivity $1$: $\Delta_\pi=1$.
\end{proposition}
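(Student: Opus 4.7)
The plan is to show the two inequalities $\Delta_\pi \leq 1$ and $\Delta_\pi \geq 1$. The lower bound is essentially a single example: take $R = \emptyset$ and $R' = \{\tau\}$, so that $d_H(R,R') = 1$ and $d_H(\pi(R), \pi(R')) = 1$, yielding ratio $1$. The real content is the upper bound, which says that projection never increases Hamming distance.

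For the upper bound, I would argue directly that $d_H(\pi(R), \pi(R')) \leq d_H(R,R')$ for \emph{any} pair $R,R' \in \mathcal{R}$, and then invoke the proposition from Section~6 (which lets us equivalently take the supremum of the ratio, or just check adjacent inputs). The key step is to exhibit an injection from $\pi(R) \ominus \pi(R')$ into $R \ominus R'$. Concretely, for each $\tau' \in \pi(R) \setminus \pi(R')$, by definition there exists some $\sigma \in R$ with $\pi(\sigma) = \tau'$, and moreover no tuple of $R'$ projects to $\tau'$, so in particular $\sigma \notin R'$, whence $\sigma \in R \setminus R'$. Choose one such $\sigma$ for each $\tau'$; different $\tau'$s force different $\sigma$s (same projection image would mean the same $\tau'$). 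Symmetrically for $\pi(R') \setminus \pi(R)$. Putting the two injections together yields $|\pi(R) \ominus \pi(R')| \leq |R \ominus R'|$.

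Once the upper bound on distances is in hand, the sensitivity bound follows immediately from Definition~\ref{def:sensitivity_extended} (or equivalently from the proposition characterising $\Delta_{\tt op}$ as a function of $\Delta_{\tt op}(\emptyset)$ and the co-domain diameter): the ratio $d_H(\pi(R),\pi(R'))/d_H(R,R')$ is bounded by $1$ for every $R \neq R'$. Combined with the example giving ratio exactly $1$, this yields $\Delta_\pi = 1$.

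The only mildly subtle step is the injection argument, and even that is routine once one realises that a tuple in the symmetric difference of the projections must be witnessed by a source tuple on the corresponding side that projects to it. I expect no other obstacles; the result is strictly easier than the union case because $\pi$ is unary and monotone-in-one-sided-presence with respect to the image, so distance can only shrink.
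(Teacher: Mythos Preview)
Your proof is correct, but you take a more general route than the paper. The paper's proof is a one-line case analysis restricted to adjacent pairs: for $R$ and $R^+ = R \cup \{\tau\}$ with $\tau \notin R$, it simply observes that
\[
|\pi_{a_1,\ldots,a_n}(R)\ominus\pi_{a_1,\ldots,a_n}(R^+)| =
\begin{cases}
0 & \text{if some } \rho\in R \text{ already agrees with } \tau \text{ on } a_1,\ldots,a_n,\\
1 & \text{otherwise,}
\end{cases}
\]
and then implicitly relies on the earlier equivalence between the adjacent-pair and general-ratio definitions of sensitivity (stated just after Definition~\ref{def:sensitivity_extended}). Your injection argument instead establishes $d_H(\pi(R),\pi(R')) \le d_H(R,R')$ for \emph{all} pairs directly, which makes the bound on the ratio self-contained without appealing to that equivalence. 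The paper's version is shorter; yours is more robust (it does not depend on the graph-metric structure of $d_H$ that underlies the adjacent-pair reduction) and in fact subsumes the paper's case split as the special case $|R\ominus R'|=1$. Given that you prove the inequality for arbitrary pairs, your mention of invoking the proposition to reduce to adjacent inputs is unnecessary; you can drop it.
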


\begin{proof}
$\qquad |\pi_{a_1,\ldots, a_n}(R) \;\ominus\; \pi_{a_1,\ldots, a_n}(R^+)| = \left\{
    \begin{array}{ll}
      0 & \exists \rho \in R.\; \forall i\in\{1,\ldots, n\}\;
      \rho(a_i) = \tau(a_i) \\
      1 & o.w.
    \end{array}
  \right. $ 
\end{proof}

\begin{definition}[Constraints for projection]
  Let $schema(R)=(A,C)$ and $A' \subseteq A$. Then\\ 
  $schema(\pi_{A'}(R)) = (A',C).$
\end{definition}

\subsection{Cartesian product}
The Cartesian product of two relation is the set theoretic Cartesian
product of two set of tuples with different attributes, with the
exception that in relations the order of attributes does not count,
thus making the operation commutative. The following example
illustrate this operation.

\begin{center}
{\footnotesize
  \begin{tabular}{lll}
    Name  & Age & Height                \\ \hline
    John  & 30  & 180                   \\
    Alice & 45  & 160                   \\
  \end{tabular}
  $~\times~$
  \begin{tabular}{ll}
    Car   & Owner                       \\ \hline
    Fiat  & Alice                       \\
    Ford  & Alice                       \\
  \end{tabular}
  $~=~$
  \begin{tabular}{lllll}
    Name  & Age & Height & Car  & Owner \\ \hline
    John  & 30  & 180    & Fiat & Alice \\
    John  & 30  & 180    & Ford & Alice \\
    Alice & 45  & 160    & Fiat & Alice \\
    Alice & 45  & 160    & Ford & Alice \\
  \end{tabular}
}\\[2ex]
\end{center}

This operator may seem odd in the context of a query language, but
it is in fact the base of the \emph{join}, the operator to merge the information of two relations.
$$ R \mathop{\bowtie}_{R.a_i=T.a_i} T = \sigma_{R.a_i=T.a_i}(R \times T) $$

We analyze now the sensitivity of the Cartesian product. 

\paragraph{One record $\times_1$} 
We first consider a restricted version $\times_1$, where on one side
we have a single tuple.

\begin{proposition}
  The operator $\times_1$ has sensitivity $1$: $\Delta_{\times_1}=1$.
\end{proposition}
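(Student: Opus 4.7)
The plan is to argue, in analogy with the proof for projection, that when one of the operands is restricted to be a single fixed tuple $\sigma$, the operator $\times_1$ behaves essentially like a relabelling of the rows of the other operand $R$, so that adding or removing one row of $R$ changes the output by exactly one row as well.

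First I would fix a tuple $\sigma$ and view $\times_1$ as the unary function $R \mapsto R \times \{\sigma\} = \{\rho \cdot \sigma \mid \rho \in R\}$, where $\rho \cdot \sigma$ denotes the concatenation of the tuples (on disjoint attribute sets). By Definition~\ref{def:sensitivity_extended} with $d_X = d_Y = d_H$, and using Notation~\ref{notation}, it suffices to bound $d_H(R \times_1 \sigma,\, R^+ \times_1 \sigma)$ when $d_H(R, R^+) = 1$, i.e. $R^+ = R \cup \{\tau\}$ with $\tau \notin R$.

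Next I would compute the symmetric difference directly:
\[
(R \times \{\sigma\}) \;\ominus\; (R^+ \times \{\sigma\}) \;=\; \{\tau \cdot \sigma\},
\]
where the key observation is that $\tau \notin R$ forces $\tau \cdot \sigma \notin R \times \{\sigma\}$ (since the $R$-side attributes of $\tau \cdot \sigma$ already uniquely determine whether the tuple belongs to $R \times \{\sigma\}$). Hence the cardinality of the symmetric difference is exactly $1$, giving $d_H \leq 1$ and therefore $\Delta_{\times_1} \leq 1$. To show that this bound is tight, I would exhibit a trivial instance (any non-empty $R$ and any $\tau \notin R$) where the ratio reaches $1$, concluding $\Delta_{\times_1} = 1$.

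There is no real obstacle here: the proof is a single-line case analysis, essentially identical in structure to the projection case. The only subtlety worth flagging in the write-up is the justification for treating the single-tuple operand as fixed, so that the operator qualifies as a unary map of type $(\mathcal{R}, d_H) \to (\mathcal{R}, d_H)$ and the Proposition on decoupling sensitivity from constraints applies; this mirrors the treatment of $\varphi$ in $\sigma_\varphi$ in Section~\ref{sec:restriction}.
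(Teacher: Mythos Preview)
The paper does not include a proof for this proposition; it is one of the proofs omitted for space and deferred to the full online version. Your argument is correct and follows exactly the pattern of the projection proof that the paper \emph{does} include: a direct computation of the symmetric difference for $R$ and $R^+$, yielding a single tuple $\tau\cdot\sigma$. Your flagged subtlety---treating the single-tuple operand as a fixed parameter so that $\times_1$ becomes a unary map, in analogy with how $\varphi$ is fixed in $\sigma_\varphi$---is the right reading of the paper's informal description of $\times_1$ as the restricted product ``where on one side we have a single tuple.''
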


\paragraph{N records $\times$}
We consider now the full Cartesian product operator. It is immediate
to see that a difference of a single row can be expanded to an
arbitrary number of records, thus causing and unbounded sensitivity.

\begin{proposition} 
  The (unrestricted) Cartesian product has unbounded sensitivity.
\end{proposition}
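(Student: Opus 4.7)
The plan is to exhibit, for every positive integer $N$, a pair of inputs at $d_{2H}$-distance $1$ whose Cartesian products lie at $d_H$-distance at least $N$; since the supremum in Definition~\ref{def:sensitivity_extended} is then $\geq N$ for all $N$, the sensitivity must be $+\infty$.

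Concretely, I would fix a schema rich enough to contain $N+1$ distinct tuples on the left and $N$ distinct tuples on the right. Choose a tuple $\tau\notin R_1$ and set $R_1' = R_1 \cup \{\tau\} = R_1^+$, and let $R_2 = R_2'$ be any relation with exactly $N$ records. Then $d_H(R_1, R_1') = 1$ and $d_H(R_2, R_2') = 0$, so
\[ d_{2H}\bigl((R_1, R_2), (R_1', R_2')\bigr) = \max(1, 0) = 1. \]
On the other hand, by the definition of Cartesian product,
\[ (R_1^+ \times R_2) \setminus (R_1 \times R_2) \;=\; \{\tau\} \times R_2, \]
and this set has cardinality $|R_2| = N$ (assuming tuples of $R_2$ are pairwise distinct, which is guaranteed since relations are sets). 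The reverse difference is empty, so
\[ d_H(R_1 \times R_2,\; R_1' \times R_2') \;=\; |\{\tau\} \times R_2| \;=\; N. \]

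Plugging into Definition~\ref{def:sensitivity_extended} gives
\[ \Delta_\times \;\geq\; \frac{d_H(R_1 \times R_2, R_1' \times R_2')}{d_{2H}((R_1,R_2),(R_1',R_2'))} \;=\; \frac{N}{1} \;=\; N. \]
Since $N$ is arbitrary, $\Delta_\times = +\infty$.

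There is no real obstacle here: the only point worth double-checking is that all $N$ new tuples $\{\tau\}\times R_2$ are genuinely distinct from the tuples already in $R_1\times R_2$, which holds because $\tau\notin R_1$ implies that none of the pairs $(\tau, \rho)$ with $\rho\in R_2$ can equal a pair $(\sigma, \rho')$ with $\sigma\in R_1$. The construction is robust to whatever constraint set $C$ one might wish to impose, provided the schema admits at least one tuple on each side with enough room to grow; if one wanted the bound to hold uniformly regardless of $C$, one would simply require that $sol(C)$ be infinite, which is the standard setting for typed attributes over $\mathbb{Z}$ or $\mathbb{R}$.
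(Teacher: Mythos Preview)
Your argument is correct and is precisely the formalization of the paper's one-line justification (``a difference of a single row can be expanded to an arbitrary number of records''): you add one tuple on the left and let the right relation have $N$ records, yielding an $N$-to-$1$ distance ratio. The paper gives no further detail, so your write-up is simply a more explicit version of the same idea.
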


We now define how constraints propagate through Cartesian product:
\begin{definition}[Constraints for product]
  Let $schema(R_1)=(A_1,C_1)$ and $schema(R_2)=(A_2,C_2)$. Then
  $schema(R_1 \times R_2) = (A_1 \cup A_2, C_1 \wedge C_2).$
\end{definition}

\subsection{Restricted $\times$}\label{restricted_times}
The effect of Cartesian product is to expand each record with a block of
records, a behavior clearly against our objective of distance-preserving
computations. However we propose some restricted versions of the
operator in order to maintain its functionality to a certain extent:
\begin{itemize}
\item $\times_n$: product with blocks of a fixed $n$ size, to obtain $n$
  sensitivity. In this case $n$ representative elements can be chosen
  from the relation, the definition of policies to pick these
  elements is left to future developments.
\item $\times_{\gamma}$: a new single record is built as an
  aggregation of the relation, through the operator $\emptyset \gamma
  f$ (presented later), thus falling in the case of $\times_1$ sensitivity.
\item a mix the two approaches could be considered, building $n$
  aggregations, possibly using the operator $_{\{a_i\}}\gamma_{f}$ (presented later).
\end{itemize}

In both approaches the rest of the query can help to select the right
records from the block, for example an external restriction could be
anticipated.

\section{Aggregation $\gamma$}\label{sec:agg}
The classical relational algebra operator for aggregation 
$_{\{a_1,\ldots, a_m\}} \;\gamma\; _{\{f_1, \ldots, f_k\}} (R)$ performs the
following steps:
\begin{itemize}
  \setlength{\itemsep}{1pt}\setlength{\parskip}{0pt}\setlength{\parsep}{0pt} 
\item it partitions $R$, so that each group has all the tuples with
  the same values for each $a_i$,
\item it computes all $f_i$ for each group,
\item it returns a single tuple for each group, with the values of
  $a_i$ and of $f_i$.
\end{itemize}
The most common function founds on RDBMS are
\texttt{count,max,min,avg,sum} and we will restrict our analysis to
these ones.
The following example illustrates how we can use an aggregation
operator to know, for each type of {\tt Car}, how many people own it
and what is their average height.

\begin{center}
  \begin{minipage}{.5\linewidth}
    {\small \texttt{SELECT ~Car, Count(*), Avg(Height) \\
      FROM ~~~R \\
      GROUPBY Car}}
  \end{minipage}
\vskip 1em
  $_{\{Car\}} \gamma_{\{Count, Avg(Height)\}} 
  {\footnotesize
  \left(
  \begin{tabular}{lllll}
    Name    & Age   & Height & Car     \\ \hline
    Alice   & 45    & 160    & Ford    \\
    John    & 30    & 180    & Fiat    \\
    Frank   & 45    & 165    & Renault \\
    Natalie & 20    & 170    & Ford    \\
  \end{tabular}
  \right)
  =
  \begin{tabular}{lllll}
    Car     & Count & Avg(Height)      \\ \hline
    Ford    & 2     & 165              \\
    Fiat    & 1     & 180              \\
    Renault & 1     & 165              \\
  \end{tabular}}
$
\end{center}
\vskip .5 em

In the domain of differential privacy special care must be taken when
dealing with this operator as it is in fact the point of the query in
which our analysis of sensitivity ends and the noise must be added to
the result of the function application.

A differentially private query should return a single value, in our
case in $\mathbb{R}$, and the only queries that statically guarantee
this property are those ending with the operator $_{\emptyset}
\gamma_{f}: (\mathcal{R},d_H) \rightarrow (\mathbb{R},d_E)$ (from here
on abbreviated $\gamma _{f}$), that apply only one function $f$ to the
whole relation without grouping.
For this reason we will ignore grouping for now, and focus on queries
of the form $_{\emptyset}\gamma_{f} (Q)$ where $Q$ is a sub-query
without aggregations. It is however possible to recover the original
$_A\gamma_F$ behavior and use it in sub-queries.

\subsection{Functions}\label{sec:aggregation_functions}

In this section we analyze the sensitivity of the common mathematical
functions \texttt{count,sum, max,min} and {\tt avg}. The application
of functions coincide with the change of domain, in fact they take as
input a relation in $(\mathcal{R},d_H)$ and return a single number in
$(\mathbb{R},d_E)$, (not to be confused with a relation with a single
tuple, which also contains a single value).

Extending standard results \cite{Dwork:06:ICALP}, we can prove that,
when $f = $ \texttt{count,sum,max,min,avg} then $\Delta_f(C)$ can be
computed as follows:

\begin{proposition}\label{prop:aggregation_functions}
\ \\
  \begin{minipage}{.55\linewidth}
    $\begin{array}{rclcl}
      \Delta_{{\tt count}}(C)     & = & 1                                          \\[1ex]
      \Delta_{{\tt sum}_{a_i}}(C) & = & \max \{ |\sup(C,{a_i})|, |\inf(C,{a_i})|\} \\[1ex]
    \end{array}$
  \end{minipage}
  \begin{minipage}{.5\linewidth}
    $\begin{array}{rclcl}
      \Delta_{{\tt max}_{a_i}}(C) & = & |\sup(C,a_i)- \inf(C,a_i)|                 \\[1ex]
      \Delta_{{\tt min}_{a_i}}(C) & = & |\sup(C,a_i)- \inf(C,a_i)|                 \\[1ex]
      \Delta_{{\tt avg}_{a_i}}(C) & = & \frac{\displaystyle |\sup(C,a_i) - \inf(C,a_i)| }{\displaystyle 2}
    \end{array}$
  \end{minipage}

\end{proposition}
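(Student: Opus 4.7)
The plan is to prove each equality separately, by first characterising the worst case among adjacent relations $R \sim R^+$ (for which $d_H = 1$) and then arguing that no pair at Hamming distance $k > 1$ can improve the ratio.

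For \texttt{count}, adjacency changes the cardinality by exactly one, and in general $|\mathtt{count}(R) - \mathtt{count}(R')| \leq |R \ominus R'| = d_H(R,R')$ by elementary counting, so the ratio is bounded by $1$ and attained on any $R \sim R^+$. For $\mathtt{sum}_{a_i}$, one writes
\[
\mathtt{sum}_{a_i}(R) - \mathtt{sum}_{a_i}(R')
= \sum_{\tau \in R \setminus R'} \tau(a_i) - \sum_{\tau \in R' \setminus R} \tau(a_i),
\]
and the triangle inequality bounds its absolute value by $d_H(R,R') \cdot \max\{|\sup(C,a_i)|, |\inf(C,a_i)|\}$. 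Tightness follows by taking $R = \emptyset$ and $R' = \{\tau\}$ with $\tau(a_i)$ realising (or approaching) the extremal value permitted by $C$; well-definedness of the supremum/infimum over $sol(C,a_i)$ is what ties the bound to the constraint system.

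For $\mathtt{max}_{a_i}$ and $\mathtt{min}_{a_i}$, the trivial bound $|\mathtt{max}(R) - \mathtt{max}(R')| \leq \sup(C,a_i) - \inf(C,a_i)$ holds because both values lie in $sol(C,a_i)$, and dividing by $d_H \geq 1$ yields the stated expression. Tightness is witnessed by $R = \{\tau_1, \tau_2\}$ with $\tau_1(a_i)=\inf$, $\tau_2(a_i)=\sup$, and $R' = R \setminus \{\tau_2\}$: then $d_H(R,R') = 1$ and the ratio equals $|\sup(C,a_i) - \inf(C,a_i)|$. The case of $\mathtt{min}$ is entirely symmetric.

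The hard part is $\mathtt{avg}_{a_i}$, because it is not additive over the symmetric difference and its variation depends on $|R|$. The approach is to fix $R$ with $n$ tuples of $a_i$-sum $S$ and $R' = R \cup \{\tau\}$ at Hamming distance one, write
\[
\mathtt{avg}(R') - \mathtt{avg}(R) = \frac{n\,\tau(a_i) - S}{n(n+1)},
\]
and bound $|n\,\tau(a_i) - S|$ by $n\,(\sup(C,a_i) - \inf(C,a_i))$ since both $\tau(a_i)$ and the contributions to $S$ lie in $sol(C,a_i)$. This gives the one-step bound $|\sup(C,a_i) - \inf(C,a_i)|/(n+1)$, which is maximised at $n=1$ and equals $|\sup(C,a_i) - \inf(C,a_i)|/2$, attained by the singleton $\{\tau_1\}$ with $\tau_1(a_i) = \inf$ against $\{\tau_1, \tau_2\}$ with $\tau_2(a_i) = \sup$. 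To cover non-adjacent pairs one argues by a telescoping sum: any path from $R$ to $R'$ in the adjacency graph has length $d_H(R,R')$, and averaging the per-step bounds (each at most $|\sup(C,a_i) - \inf(C,a_i)|/2$) keeps the ratio below the same constant. The main obstacle is this last telescoping argument, because along the path one may transit through relations of size one where the per-step bound is largest, so one has to verify that the cumulative change cannot exceed $d_H(R,R') \cdot |\sup(C,a_i) - \inf(C,a_i)|/2$; this is handled by the concavity observation that the maximal single-step variation occurs precisely when passing through a singleton, and at most one such step can contribute the full $(\sup - \inf)/2$ to any given path.
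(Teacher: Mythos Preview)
The paper does not include a proof of this proposition in the present version (it cites standard results from \cite{Dwork:06:ICALP} and defers details to the full version), so there is no line-by-line argument to compare against. Your overall strategy---settle the adjacent case, then extend via a shortest path in the adjacency graph---is the natural one and is essentially correct. Two remarks on the \texttt{avg} case.

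First, your telescoping argument is already complete once you note that every single step contributes at most $(\sup-\inf)/2$: summing $k$ such bounds gives $k\cdot(\sup-\inf)/2$, hence the ratio is at most $(\sup-\inf)/2$, and you are done. The closing sentence about ``at most one such step can contribute the full $(\sup-\inf)/2$'' is both unnecessary and wrong as stated (a path may oscillate between singletons and pairs and hit the maximal step bound several times; that is perfectly compatible with the global bound). Delete it.

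Second, the genuine technical point you do \emph{not} address is that \texttt{avg} is undefined on the empty relation, so the telescoping path must avoid $\emptyset$. When $R\cap R'\neq\emptyset$ this is automatic along the canonical path. When $R\cap R'=\emptyset$, first add one tuple of $R'$, then remove all of $R$, then add the remaining tuples of $R'$: this path still has length $|R|+|R'|=d_H(R,R')$ and never visits a relation of size zero. With that adjustment, and the observation that your per-step bound $(\sup-\inf)/(n+1)$ is valid for all $n\geq 1$, the argument is complete.
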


\subsection{Exploiting the constraint system}

The sensitivity of aggregation functions, as shown above, depends on
the range of the values of an attribute, so clearly it is important to
compute the range as accurately as possible.

The usual approach is to consider the bounds given by the domain of
each attribute. In terms of constraint system, this corresponds to
consider the solutions of the constraint $C_I = a_1\in D_1\wedge
a_2\in D_2\wedge \ldots \wedge a_n\in D_n$. I.e. the standard
approach computes the sensitivity of aggregation functions for an
attribute $a$ on the basis of $\sup(C_I,a)$ and $\inf(C_I,a)$.

In our proposal we also use $C_I$: for us it is the initial
constraint, at the beginning of the analysis of the query. The
difference is that our approach updates this constraints with
information provided by the various components of the query, and then
exploits this information to compute more accurate ranges for each
attribute. The following example illustrates the idea.

\begin{example}

Assume that $schema(R)=(\{Weight,Height\},C_I)$, and that the domain
for $Weight$ is $[0,150]$ and for $Height$ is $[0,200]$.
The following query asks the average weight of all the individuals
whose weight is below the height minus $100$.
$$\gamma_{avg(Weight)}(\sigma_{Weight \leq Height -100}(R))$$
Below we show the initial constraint $C_I$ and the constraints $C_Q$
computed by taking into account the condition of $\sigma$. Compare the
sensitivity computed using $C_I$ with the one computed using $C_Q$:
They differ because in $C_Q$ the max value of $Weight$ is $100$, while
in $C_I$ is $150$.
\[ \footnotesize 
\begin{array}{llllllll}
  C_I = \{  {\small W }\in [0,150] \;\wedge\; H \in [0,200] \} &&
  \Delta(C_I,\gamma_{avg(W)}) = \frac{|max(C_I,W) - min(C_I,W)|}{2}
  =75 \\[1ex]

  C_Q = \{ W \in [0,150] \;\wedge\; H \in [0,200] \; \wedge \; W \leq H -100 \}&&
  \Delta(C_Q,\gamma_{avg(W)})= \frac{|max(C_Q,W) - min(C_Q,W)|}{2}
  =50 
\end{array}\]
\end{example}

Hence exploiting the constraints generated by the query can lead to a
significant reduction of the sensitivity.

\subsection{Constraints generated by the functions}
We now define how to add new constraints for the newly created
attributes computed by the functions. 

\begin{definition}[Constraints for functions]
  Let $schema(R)=(A,C)$, $A' \subseteq A$ and \\
  $F = \{f_1(a_1), \ldots, f_n(a_n)\}$, where
  $a_1,\ldots,a_n \in A$. Then $schema(_{A'} \;\gamma\; _F (R)) = 
  (A' \cup \{a_{f_1} \ldots a_{f_n}\},~ C \wedge c_{f_1} \wedge \ldots \wedge c_{f_n})$, where:
  \[\begin{array}{c}
    c_{f_i} = \left\{
      \begin{array}{rcll}
        min(C,a_i) \leq & a_{f_i} & \leq sup(C,a_i) & 
        \text{if} ~ f_i = max/min/avg \\
        0  \leq         & a_{f_i} &                 & 
        \text{if} ~ f_i = sum/count   \\
      \end{array}
    \right.
  \end{array}\]
\end{definition}

\section{Global sensitivity}\label{sec:glob-sen}

We have concluded the analysis for all operators of relational
algebra, and we now define the sensitivity of the whole query in a
compositional way. 

For the computation of the sensitivity, we need to take into account
the constraint generated by it. We start by showing how to compute
this constraint, in the obvious (compositional) way.  Remember that we
have already defined the constraints generated by each relational
algebra operator in Sections~\ref{sec:row-op}, \ref{sec:att-op} and
\ref{sec:agg}.

\begin{definition}[Constraint generated by an intermediate query]
  The global constraint generated by an intermediate query $Q$ on
  relations with relational schema $r(a_1:D_1,a_2,D_2,\ldots,a_n:D_n)$
  is defined statically as:
  $$C_Q = schema(Q(R))$$
  where $R$ is any relation such that $schema(R) =
  (\{a_1,a_2,\ldots,a_n\}, C_I)$, with $C_I=a_1\in D_1\wedge a_2\in
  D_2\wedge \ldots\wedge a_n\in D_n$.
\end{definition}

We assume, the top-level operator in a query is an aggregation
$\gamma_{f}$, followed by a query composed freely using the other
operators. We now show how to compute the sensitivity of the
latter. Since it is a recursive definition, for the sake of elegance
we will assume an identity query $\mathit{Id}$. 

\begin{definition}[Intermediate query sensitivity]\label{def:query_sensitivity}
  Assume ${\tt op}: (\mathcal{R}^n,d_{nH}) \rightarrow
  (\mathcal{R},d_{H})$ and $C_{\tt op}$ the constraint obtained
  \emph{after} the application of ${\tt op}$:
  \[\begin{array}{lcll}
    S(\mathit{Id})            & = & \min(1,diam(C_{Id}))                                            & \text{base case} \\
    S({\tt op} \circ Q)       & = & \min \left( \Delta_{\tt op} \cdot S(Q), diam(C_{{\tt op} \circ Q}) \right) & \text{if}~n=1     \\
    S({\tt op } \circ (Q_1,Q_2)) & = & \min\left(\Delta_{\tt op} \cdot \max(S(Q_1),S(Q_2)), diam(C_{{\tt op} \circ (Q_1,Q_2)})\right) & \text{if}~n=2
  \end{array}\]
  where {\tt op} can be any of $\cup, \cap, \setminus, \sigma, \pi,
  \times, \times_1$ and the (classic) $_A\gamma_F$.
\end{definition}

We are now ready to define the global sensitivity of the query:

\begin{definition}[global sensitivity]\label{def:gs} 
The global sensitivity $GS$ of a query $\gamma_{f} (Q)$ is defined as:
\[\begin{array}{lcll}
  GS(\gamma_{f} (Q))           & = & \left\{ 
    \begin{array}{ll}
      \Delta_f(C_Q) \cdot S(Q) & \quad \text{if } f = \tt{count,sum,avg} \\[2ex]
      \Delta_f(C_Q)            & \quad \text{if } f = \tt{max,min}
    \end{array}
  \right.
\end{array}
\]
\end{definition}
 
The following theorem, (proof in full version \cite{self}), expresses
the soundness and the strictness of the bound computed with our
method.

\begin{theorem}[Soundness and strictness]
  The sensitivity bound computed by $GS(\cdot)$ is sound and strict. Namely:
  $$ GS(\gamma_{f} (Q))\;=\; \Delta_{\gamma_{f} (Q)} $$
\end{theorem}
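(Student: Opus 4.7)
The plan is to prove the equality by structural induction on the query $Q$, treating the soundness direction ($GS \geq \Delta$) and the strictness direction ($GS \leq \Delta$) in parallel whenever possible, and using the decoupling proposition from the body of the paper as the main workhorse.

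First I would dispatch the base case $Q = \mathit{Id}$: by Definition~\ref{def:sensitivity_constrained}, $\Delta_{\mathit{Id}}(C_I) = \sup_{R\neq R'} d_H(R,R')/d_H(R,R')$ which equals $1$ whenever the constrained schema admits two distinct relations, and the supremum is limited from above by $\mathit{diam}(C_I)$ when the set of admissible relations is small. This exactly matches $\min(1, \mathit{diam}(C_{Id}))$.

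For the inductive step on an intermediate (non-aggregating) operator, I would invoke the earlier Proposition stating $\Delta_{\tt op}(C) = \min(\Delta_{\tt op}, \mathit{diam}(C\otimes C_{\tt op}))$. Combining this with the standard composition inequality $\Delta_{f\circ g}(C) \leq \Delta_f(C\otimes C_g) \cdot \Delta_g(C)$ yields the soundness half: for $n=1$, $\Delta_{\mathtt{op}\circ Q} \leq \Delta_{\tt op}\cdot S(Q)$, and also $\Delta_{\mathtt{op}\circ Q}\leq \mathit{diam}(C_{\mathtt{op}\circ Q})$ simply because the numerator of the sensitivity ratio cannot exceed the diameter of the codomain. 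The $n=2$ case uses $d_{nH}$'s $\max$ definition to reduce to the unary argument. For strictness I would exhibit, for each operator, two neighbour relations realising the intrinsic $\Delta_{\tt op}$ (the worst-case configurations explicitly shown in each Proposition in Sections~\ref{sec:row-op}--\ref{sec:att-op}) built from the witnesses inductively guaranteed to exist for $Q$; when $\mathit{diam}(C_{\mathtt{op}\circ Q})$ is the smaller of the two quantities, the minimum is forced because no pair can exceed the diameter.

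The top-level step treats the aggregation $\gamma_f$ separately because it crosses metric spaces, from $(\mathcal{R},d_H)$ to $(\mathbb{R},d_E)$. For $f\in\{\mathtt{count},\mathtt{sum},\mathtt{avg}\}$ the bound $\Delta_f(C_Q)\cdot S(Q)$ comes from composition; strictness uses Proposition~\ref{prop:aggregation_functions} to pick tuples attaining $\sup(C_Q,a)$ and $\inf(C_Q,a)$, and then lifts these point-witnesses to relation-level witnesses of $S(Q)$ via the inductive hypothesis. For $f\in\{\mathtt{max},\mathtt{min}\}$ the multiplicative factor disappears: however large $S(Q)$ is, the output can only move within $[\inf(C_Q,a),\sup(C_Q,a)]$, so $\Delta_f(C_Q)$ alone saturates the bound, and strictness follows by exhibiting two relations whose $f$-values are exactly $\sup(C_Q,a)$ and $\inf(C_Q,a)$.

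The main obstacle is the strictness direction for composite queries whose intermediate constraints have small diameter: one must verify that the witnesses supplied by the inductive hypothesis survive under the restriction to $\mathit{sol}(C)$, i.e. that the extremal pairs constructed at stage $Q$ actually satisfy the propagated constraints of $\mathtt{op}\circ Q$. I would handle this by noting that the $\min$ in Definition~\ref{def:query_sensitivity} precisely switches between the two regimes: when $\Delta_{\tt op}\cdot S(Q) \leq \mathit{diam}(C_{\mathtt{op}\circ Q})$, the inductive witnesses extend directly; otherwise the diameter term dominates and diameter-realising relations serve as witnesses. A short verification that our constraint-propagation rules (Definitions in Sections~\ref{sec:row-op}--\ref{sec:agg}) are \emph{exact}---i.e. a tuple satisfies $C_{\mathtt{op}\circ Q}$ iff it is producible by $\mathtt{op}\circ Q$ on some input satisfying $C_Q$---closes the argument and is the only place where one must check each operator individually.
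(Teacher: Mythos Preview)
The paper itself omits this proof entirely, deferring it to the full version, so there is no in-paper argument to compare against.  Your structural-induction plan---pushing the decoupling Proposition through the syntax of $Q$, handling the top-level aggregation separately because it changes metric, and building witness pairs for strictness---is the natural approach and almost certainly what the full version does.

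There is, however, a concrete gap at the step you label routine.  You write that a ``short verification'' that the constraint-propagation rules are \emph{exact} closes the argument; for set difference this verification fails.  The paper sets $schema(R_1\setminus R_2)=(A,\,C_1\wedge\neg C_2)$, but a tuple of $R_1\setminus R_2$ need not satisfy $\neg C_2$: it merely fails to belong to that particular $R_2$ while possibly still lying in $\mathit{sol}(C_2)$.  The rule is therefore an \emph{under}-approximation, and this breaks soundness, not just strictness.  Concretely, with a single attribute and $C_1=C_2=(x\in[0,10])$ one gets $C_1\wedge\neg C_2$ unsatisfiable, hence $\mathit{diam}=0$ and $S(\mathit{Id}\setminus\mathit{Id})=\min(2,0)=0$, so $GS(\gamma_{\mathtt{count}}(\mathit{Id}\setminus\mathit{Id}))=0$.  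Yet $(R_1,R_2)=(\{5\},\emptyset)$ versus $(\emptyset,\emptyset)$ are at $d_{2H}$-distance $1$ and give counts $1$ and $0$, so the true sensitivity is at least $1$.

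Your inductive machinery is fine; the problem is that the lemma you defer (exactness of propagation) is false for one of the operators as the paper defines it.  The repair is to replace the difference rule by $schema(R_1\setminus R_2)=(A,C_1)$; strictness for difference is then recoverable via $R_2=\emptyset$.  With that correction your outline goes through, but as written the ``short verification'' is where the proof breaks rather than where it closes.
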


\section{Related Work}\label{sec:related}
The field of privacy in statistical databases has often been
characterized by ad-hoc solutions or algorithms to solve specific
cases \cite{DBLP:dblp_conf/tamc/Dwork08}. In recent years however
there have been several efforts to develop a general framework to
define differentially private mechanisms. In the work
\cite{DBLP:dblp_conf/icfp/ReedP10} the authors have proposed a
functional query language equipped with a type system that guarantees
differential privacy. Their approach is very elegant, and based on
deep logical principles. However, it may be a bit far from the
practices of the database community, addressing which is the aim of
our paper.

The work that is closest to ours, is the PINQ framework
\cite{DBLP:dblp_conf/sigmod/McSherry09}, where McSherry extends the
LINQ language, with differential privacy functionalities developed by
himself, Dwork and others in \cite{DBLP:dblp_conf/tcc/DworkMNS06}.

Despite this existing implementation we felt the need for a more
universal language to explore our ideas, and the mathematically-based
framework of relational algebra seemed a natural choice. Furthermore
the use of a constraint system to increase the precision of the
sensitivity bound was, to out knowledge, never explored before.

\section{Conclusions and future work}\label{sec:conclusion}
We showed how a classical language like relational algebra can be a
suitable framework for differential privacy and how technology already
in place, like check constraints, can be exploited to improve the
precision of our sensitivity bounds.

Our analysis showed how the most common operation on databases, the
join $\bowtie$, poses great privacy problems and in future we hope to
develop solutions to this issue, possibly along the lines already
presented in Section~\ref{restricted_times}.

In this paper we have considered only the sensitivity, that is the
effect on distances of operators, while another interesting aspect
would be to compute the effect on the $\epsilon$ exponent as explored
in \cite{DBLP:dblp_conf/sigmod/McSherry09}, and possibly propose
convenient strategies to query as much as possible over disjoint data
sets.

\bibliographystyle{eptcs}
\bibliography{biblio}

\end{document}